\documentclass[final,5p,times,twocolumn,nopreprintline]{elsarticle}

\usepackage[T1]{fontenc}
\usepackage{amsmath,amssymb,amsfonts}
\usepackage{graphicx}
\usepackage{physics}
\usepackage{hyperref}
\hypersetup{hidelinks}
\usepackage{geometry}
\usepackage{booktabs}

\renewcommand{\emph}[1]{#1}
\newcommand{\Event}{\mathcal{E}}
\newcommand{\History}{\mathcal{H}}
\newcommand{\Bandwidth}{\mathfrak{B}}
\newtheorem{prop}{Proposition}
\newtheorem{defn}{Definition}
\newenvironment{proof}{\par\noindent\textit{Proof.}\ }{\hfill$\square$\par\medskip}
\journal{Annals of Physics}

\begin{document}

\begin{frontmatter}

% ----------------------------------------------------------------------
% Title
% ----------------------------------------------------------------------
\title{Events as Spacetime Anchors:\\
Local Irreversibility at the Interface of Quantum Field Theory and Relativity\tnoteref{accepted}}
\tnotetext[accepted]{\raggedright Accepted for publication in \emph{Annals of Physics}. \copyright \ 2026. This manuscript version is made available under
the \href{https://creativecommons.org/licenses/by-nc-nd/4.0/}{CC BY-NC-ND 4.0 license}.}

% ----------------------------------------------------------------------
% Author
% ----------------------------------------------------------------------
\author{Shuo Zhang}
\address{Department of Astronomy, Tsinghua University, Beijing 100084, China}
\ead{szhang.astro@gmail.com}

% ----------------------------------------------------------------------
% Abstract
% ----------------------------------------------------------------------
\begin{abstract}
The reconciliation of quantum field theory (QFT) with general relativity (GR) is often framed as a problem of unifying dynamical laws. In this work, we argue that an underappreciated aspect of this tension concerns the descriptive primitives of the two theories: GR is naturally organized around spacetime events and their causal ordering, while QFT is organized around states, operators, and unitary evolution, without an intrinsic criterion for when a quantum process culminates in a definite spacetime fact.

We propose an event-centered framework in which events, defined as locally irreversible records generated by quantum--environment interactions, serve as the minimal interface between quantum dynamics and relativistic spacetime structure. To model the transition from quantum potentiality to relativistic facticity, we introduce a set of operational criteria based on quantum information theory: local classicalization, redundant record formation, and irreversibility against recovery. This transition, termed local generative freezing, is illustrated through an explicit repeated-collision toy model. The model serves as an illustrative open-system realization of the criteria rather than a derivation from relativistic QFT. Within globally unitary dynamics, the freezing time $t_*$ can be obtained in closed form. The model exhibits a threshold-independent certification structure: local classicalization is certified no later than channel-level irrecoverability for any admissible tolerances, and earlier for generic initial states up to the integer rounding inherent in the collision model. The completion of anchoring is determined by the more demanding of the two requirements, irrecoverability and redundancy, thereby organizing the parameter space into distinct anchoring regimes.

In this picture, physical history is organized, at the level of anchored records, not as a continuous trajectory in state space but as a partially ordered skeleton of frozen events, upon which effective field-theoretic descriptions operate. The proposed framework addresses the question of event anchoring, namely when a candidate outcome becomes a stable spacetime fact, while remaining fully compatible with standard QFT and relativistic causality. It does not derive the Born rule or resolve single-outcome selection, which pertain to the separate question of event generation.
\end{abstract}

\begin{keyword}
Events \sep Irreversibility \sep Spacetime Emergence \sep Quantum Field Theory \sep General Relativity \sep Foundations of Physics
\end{keyword}

\end{frontmatter}

% ======================================================================
\section{Introduction: The Ontological Status of Events}
\label{sec:intro}
% ======================================================================

The reconciliation of quantum field theory (QFT) with general relativity (GR) is often framed as a technical problem involving quantization schemes or ultraviolet completion. While such issues are important, they obscure an often-overlooked aspect of the tension between the two theories: they employ different descriptive primitives.

General relativity is naturally organized around the notion of events. Physical reality is described as a spacetime history composed of localized occurrences and their causal relations. Although coordinates and time parameters depend on the observer, the causal ordering of events is invariant and constitutes the physical content of the theory. In this sense, spacetime is not merely a background but a structured record of definite facts.

Quantum field theory, by contrast, does not treat events as primitive elements of its formalism. Its fundamental description involves quantum states evolving unitarily, field operators, and correlation functions \cite{Haag1996}. Interactions are described as continuous and, in principle, reversible processes in Hilbert space. While QFT successfully predicts transition amplitudes and correlation patterns, it does not by itself provide an intrinsic criterion for when a physical process culminates in a definite outcome that can be regarded as a spacetime fact.

To be clear, QFT is not devoid of localized happenings. Operator insertions, interaction vertices, and detector responses all describe localized processes. However, these elements either remain part of a globally coherent amplitude (as in perturbative scattering theory) or acquire definiteness only through external measurement postulates (as in detector models). Standard QFT does not provide an internal criterion for when such a process becomes an irreversible, redundantly recorded spacetime fact. This is the gap that the present work aims to address.

This question has received renewed attention in recent years. Work on quantum reference frames has shown that events and their localization are relative to the operational capabilities of a laboratory or observer \cite{Vilasini2025,Apadula2024}. The Fewster--Verch measurement framework has demonstrated how measurement events can be constructed as local dynamical couplings between target and probe fields, avoiding the pathologies of instantaneous projection in relativistic settings \cite{Mandrysch2024}. Studies of temporal quantum reference frames have revealed that event ordering itself can become indefinite when clocks have finite energy resources \cite{Hausmann2023}. These developments point toward a broadly operational and relational understanding of events in quantum theory. The present work contributes to this broader program by proposing specific information-theoretic criteria for when a quantum process becomes an anchored spacetime fact.

A central distinction, which we make explicit throughout the manuscript, is between event generation and event anchoring. Event generation concerns how a particular outcome is first produced or selected from among alternatives---a question intimately connected to the measurement problem, the Born rule, and single-outcome selection. Event anchoring concerns a subsequent question: once a candidate outcome exists, when does it become sufficiently classicalized, redundantly recorded, and effectively irrecoverable to function as a stable element of spacetime history? The present paper addresses anchoring only. We regard this as a necessary component of any eventual account of event generation, but we do not claim that it constitutes such an account on its own.

The paper is organized as follows. Section~\ref{sec:events_rel_qft} analyzes the different roles of events in relativity and in QFT. Section~\ref{sec:decoherence} clarifies the relation between the present framework and decoherence. Section~\ref{sec:freezing} introduces the notion of local generative freezing and provides operational criteria for event anchoring. Section~\ref{sec:bandwidth} discusses the role of finite information-processing capacity. Section~\ref{sec:toymodel} presents an explicit repeated-collision toy model that realizes the freezing conditions analytically within globally unitary dynamics. Section~\ref{sec:constructing_history} develops the construction of spacetime history as a partially ordered set of events. Section~\ref{sec:discussion} discusses conceptual implications and limitations. Finally, Section~\ref{sec:conclusion} presents a conclusion. 

% ======================================================================
\section{Events in Relativity and in Quantum Field Theory} % REVISED TITLE
\label{sec:events_rel_qft}
% ======================================================================

In relativistic physics, an event corresponds to a localized spacetime occurrence whose physical significance is determined by its causal relations rather than by its coordinate description. Both special and general relativity describe physical reality in terms of a manifold equipped with a causal structure \cite{Einstein1915,Minkowski1908}, where the primary content is encoded in relations such as timelike, spacelike, or null separation between events. Events and their causal ordering are invariant under changes of reference frame and therefore constitute the minimal building blocks of relativistic spacetime.

Quantum field theory contains localized structures---operator insertions, interaction vertices, and detector couplings---but does not by itself provide a criterion for when such structures become definite, irreversible spacetime facts. Operator insertions encode coherent amplitudes rather than definite occurrences, while detector clicks become definite only within an additional measurement framework. Its basic objects are quantum states and field operators evolving unitarily \cite{Haag1996}. Even when interactions occur, QFT describes them as continuous dynamical processes rather than discrete spacetime facts. The theory provides amplitudes for transitions and correlations, but it does not, by itself, single out when or where a particular outcome becomes an objective element of spacetime history.

This contrast is especially evident in the treatment of measurement. In relativistic spacetime, a measurement outcome is naturally treated as an event: it occurs at a definite spacetime location and can influence future events within its future light cone. In QFT, however, measurement outcomes are not primitive elements of the theory; they are introduced through additional interpretational or operational postulates. The formalism itself contains no intrinsic criterion for identifying the completion of an interaction as a spacetime event.

To sharpen this distinction, it is useful to separate three layers of structure:
\begin{enumerate}
    \item Localized interaction: a quantum process occurring within a bounded spacetime region. QFT provides these.
    \item Candidate record: a retainable imprint of the interaction outcome, formed across a system--environment boundary. Decoherence theory addresses these.
    \item Anchored event: the candidate record has become classicalized, redundantly broadcast, and effectively irrecoverable---functioning as a stable node in a causal network. This is what the present framework formalizes.
\end{enumerate}

The gap addressed in this paper is the transition from layer (2) to layer (3). Bridging this gap requires identifying operational conditions under which a quantum process produces an outcome that is not merely locally classical in appearance, but irreversibly anchored as a spacetime fact. Crucially, this must be achieved within globally unitary dynamics, without modifying the fundamental quantum formalism. This compatibility requirement is nontrivial: holographic dualities such as AdS/CFT demonstrate that globally unitary boundary dynamics can coexist with an effective semiclassical spacetime description in the bulk, but the duality itself does not specify when a particular bulk configuration becomes a definite classical event. The present framework addresses precisely this specification.

% ======================================================================
\section{Relation to Decoherence and the Status of Events}
\label{sec:decoherence}
% ======================================================================

It is important to clarify the precise relation between the present framework and the standard theory of decoherence \cite{Zurek2003,Schlosshauer2007}, as the two address related but distinct questions.

Decoherence provides a dynamical explanation for the suppression of quantum interference in the reduced state of a system due to entanglement with environmental degrees of freedom. Given a system $S$ interacting with an environment $E$, decoherence describes the emergence of an approximately diagonal reduced state in a preferred (pointer) basis:
\begin{equation}
\rho_S(t) \;\longrightarrow\; \Delta(\rho_S(t)),
\end{equation}
such that $\|\rho_S(t) - \Delta(\rho_S(t))\|_1 \le \varepsilon_d$ for some small $\varepsilon_d$. This explains why interference effects become inaccessible for local observables on $S$.

However, decoherence alone does not specify when, or whether, a particular outcome has become an objective element of spacetime history. The reduced density matrix after decoherence represents an effective mixture, and its diagonalization does not by itself establish an irreversible, causally embedded record. From a global perspective, the joint system--environment state remains pure and evolves unitarily; the apparent classicality is a feature of the reduced description.

The present framework addresses a different question: how does a specific outcome become anchored as an irreversible element of spacetime history that can participate in relativistic causal relations? This is not a question about the structure of a density matrix, but about the formation of objective records that define the causal skeleton of spacetime.

The two frameworks are therefore complementary rather than competing:
\begin{itemize}
    \item Decoherence concerns the state-level disappearance of interference.
    \item Local generative freezing concerns the channel-level emergence of irreversibly recorded outcomes that can be embedded into a relativistic causal network.
\end{itemize}
In short, decoherence explains why superpositions are not observed, whereas freezing explains when an outcome becomes a spacetime event. As we demonstrate in Section~\ref{sec:toymodel}, these two transitions are genuinely distinct: in the explicit toy model, classicalization is certified no later than channel-level irrecoverability for any admissible tolerances ($\varepsilon_{eb} \le \varepsilon_d$)---and strictly earlier for generic initial states---while full anchoring further awaits redundant record formation.

% ======================================================================
\section{Local Generative Freezing: Definition and Criterion}
\label{sec:freezing}
% ======================================================================

We now introduce the notion of local generative freezing, which characterizes the anchoring of events from quantum interactions in operational and information-theoretic terms.

\subsection{Definition of an Anchored Event} % REVISED TITLE

\begin{defn}[Anchored Event]
\label{def:anchored}
An anchored event is a spacetime-localized quantum--environment interaction whose outcome is classicalized, redundantly recorded in the environment, and effectively irrecoverable, so that the outcome can be embedded as a node in a relativistic causal structure.
\end{defn}

This definition emphasizes three essential features: (i) a physical interaction, (ii) irreversible record formation, and (iii) causal embeddability. To make this notion operational, we introduce a set of jointly sufficient conditions for event anchoring.

\subsection{Freezing Conditions}

Consider a system $S$ interacting with an environment $E$, which may be decomposed into approximately independent fragments $E_1, E_2, \dots, E_N$. Let $X$ denote the classical outcome variable associated with a pointer observable of $S$, and let $\Delta$ denote the completely dephasing map in the pointer basis.

Local generative freezing is said to occur at the earliest time $t_*$ such that the following three conditions are simultaneously satisfied:

\paragraph{(A) Local Classicalization}
The reduced state of the system is approximately diagonal in the pointer basis:
\begin{equation}
\|\rho_S(t) - \Delta(\rho_S(t))\|_1 \le \varepsilon_d .
\end{equation}
This condition corresponds to effective decoherence but is not sufficient on its own.

\paragraph{(B) Redundant Record Formation}
Information about the outcome $X$ is redundantly imprinted in multiple environmental fragments \cite{Zurek2009,Blume-Kohout2006,Riedel2012}:
\begin{equation}
I(X:E_i) \ge (1-\delta)\, H(X)
\quad \text{for many } i ,
\end{equation}
where $I$ denotes mutual information and $H(X)$ is the Shannon entropy of $X$. Define the redundancy $R_\delta$ as the number of disjoint fragments that independently carry $(1-\delta)H(X)$ bits of information about $X$. Freezing requires
\begin{equation}
R_\delta(t) \ge R_* ,
\end{equation}
where $R_*$ is a threshold indicating that the outcome has become publicly accessible and robust. When individual fragments carry insufficient information, groups of $m$ consecutive fragments may be considered; the minimum group size $m_\delta$ and the resulting grouped redundancy $R_\delta = \lfloor n/m_\delta \rfloor$ are computed explicitly in Section~\ref{sec:toymodel}.

\paragraph{(C) Irreversibility Against Recovery}
After tracing out inaccessible environmental degrees of freedom, the effective channel experienced by $S$ must have lost the ability to transmit pointer-basis coherence or entanglement with an external reference. Formally, the induced channel
\begin{equation}
\mathcal{N}_{S \to S}(t) = \mathrm{Tr}_E \circ \mathcal{U}_{SE}(t)
\end{equation}
must act, to within tolerance, as a classical readout in the pointer basis. Let $\mathcal{C}_X$ denote the set of \emph{pointer measure-and-prepare channels}, i.e., channels of the form $\Phi(\rho) = \sum_k \langle k|\rho|k\rangle\,\sigma_k$ with each re-prepared state $\sigma_k$ diagonal in the pointer basis; every element of $\mathcal{C}_X$ is entanglement breaking \cite{Horodecki2003}. The criterion reads
\begin{equation}
\min_{\Phi \in \mathcal{C}_X}
\|\mathcal{N}_{S \to S}(t) - \Phi\|_\diamond
\le \varepsilon_{eb}.
\end{equation}
Since $\mathcal{C}_X \subset \mathrm{EB}$, closeness to $\mathcal{C}_X$ implies closeness to the entanglement-breaking set, so the physical motivation---loss of any capability to transmit quantum correlations---is retained; we adopt the smaller set because it is exactly computable in the model of Section~\ref{sec:toymodel} and because it cleanly implies condition (A), as discussed below. This condition is a statement about the effective reduced channel, not about the global dynamics. The full system--environment state may remain pure and evolve unitarily; what is required is that the channel experienced by the local system act as an approximate pointer-basis readout for \emph{all} input states, including inputs entangled with an external reference. This is a strictly stronger requirement than condition (A), which certifies classicality for one specific state.

\paragraph{Logical Relations Among Conditions}
Condition (C) implies condition (A) quantitatively: if $\mathcal{N}_{S\to S}$ is within $\varepsilon_{eb}$ of some $\Phi \in \mathcal{C}_X$ in diamond norm, then \emph{every} evolved reduced state satisfies $\|\rho_S - \Delta(\rho_S)\|_1 \le 2\varepsilon_{eb}$ (\ref{app:channel}). This implication would fail for the unrestricted entanglement-breaking set: an entanglement-breaking channel may re-prepare coherent states, so separability of the output with respect to a reference does not by itself entail diagonality in the pointer basis; the restriction to $\mathcal{C}_X$ repairs this. We retain (A) explicitly for two reasons. First, it is operationally more accessible and can be verified through local measurements on $S$ alone. Second, (A) is a statement about the one state actually prepared, whereas (C) is a worst-case statement over all inputs; consequently (A) is certified no later than (C), and for generic initial states strictly earlier, for any admissible tolerances $\varepsilon_{eb} \le \varepsilon_d$ (Proposition~\ref{prop:AC} of Section~\ref{sec:toymodel}). The distinction between (A) and (C) therefore captures the temporal separation between decoherence and event anchoring: within this window, the system appears classical but is not yet certifiably irrecoverable.

\subsection{Freezing Time}

The freezing time is defined as
\begin{equation}
t_* = \inf \{ t \mid \text{Conditions (A), (B), and (C) are satisfied} \}.
\end{equation}
At $t = t_*$, the outcome becomes an anchored event in the sense of Definition~\ref{def:anchored}, providing a discrete and irreversible anchor for spacetime history.

\paragraph{Threshold Robustness}
The parameters $\varepsilon_d$, $\delta$, $R_*$, and $\varepsilon_{eb}$ define operational tolerances rather than fundamental constants. Physical conclusions are expected to be robust under reasonable variations of these parameters: for typical macroscopic interactions, all three conditions are satisfied rapidly and decisively, while for carefully isolated quantum systems, none are satisfied. The crossover regime, where threshold values become relevant, corresponds to mesoscopic situations where the boundary between quantum and classical behavior is itself physically meaningful. In this sense, the framework characterizes event anchoring as a crossover phenomenon rather than a sharp phase transition. The tolerances also admit a direct operational reading: two channels at diamond-norm distance $\varepsilon$ can be discriminated with single-use success probability at most $\tfrac{1}{2}(1+\varepsilon/2)$, so certifying residual recoverability at level $\varepsilon$ requires of order $1/\varepsilon^{2}$ independent interrogations \cite{Watrous2018}. The thresholds therefore encode the finite resolution of physically bounded observers---the same finite information-processing capacity discussed in Section~\ref{sec:bandwidth}---rather than arbitrary conventions.

% ======================================================================
\section{Finite Information-Processing Capacity}
\label{sec:bandwidth}
% ======================================================================

The emergence of local generative freezing relies on finite physical limits on information redistribution. Importantly, this limitation is not epistemic or cognitive, but physical: it reflects constraints on information flow imposed by locality, finite coupling strengths, and environmental degrees of freedom.

To characterize this capacity, we introduce a diagnostic quantity: the maximum rate at which system--environment correlations can develop per interaction step. In the explicit toy model of Section~\ref{sec:toymodel}, this corresponds to the per-step mutual information
\begin{equation}
I(X:E_k) = H_{\mathrm{bin}}\!\left(\frac{1+|\cos\theta|}{2}\right),
\label{eq:bandwidth}
\end{equation}
where $H_{\mathrm{bin}}$ denotes the binary entropy function, and which is directly determined by the coupling strength $\theta$ of the system--environment interaction. This quantity plays the role of a ``bandwidth'' $\Bandwidth$ for the information channel between system and environment.

In the toy model, coherence drops rapidly in the first few interaction steps---the regime where $\Bandwidth$ is most relevant---while later interactions primarily build classical redundancy (Condition B). The information-processing capacity therefore characterizes the early, dominant phase of information transfer from system to environment, rather than serving as a standalone general principle.

More generally, we characterize this capacity using quantum information-theoretic quantities \cite{NielsenChuang2000}. Let $\mathcal{N}_{S\to E}$ denote the quantum channel induced by the system--environment interaction. The ability of the system to maintain quantum coherence is bounded by the channel's capacity to transmit quantum information. When the effective quantum capacity satisfies
\begin{equation}
Q(\mathcal{N}_{S\to E}) \le \varepsilon_Q ,
\end{equation}
coherent phase information can no longer be preserved or recovered.

The relative entropy of coherence $C_{\mathrm{rel}}(\rho) = S(\Delta\rho) - S(\rho)$ \cite{Baumgratz2014} provides a natural measure of remaining quantum coherence in the system. In the toy model, $C_{\mathrm{rel}}$ drops to near zero within the first few collisions, confirming that the information-carrying capacity of the interaction channel is saturated early in the process.

In this sense, laboratory measurement may be viewed as an engineered regime in which the conditions for anchoring are realized in a controlled way, whereas natural events arise when analogous conditions are reached through uncontrolled environmental interactions.

% ======================================================================
\section{Explicit Toy Model: Repeated Collisional Interaction} % NEW
\label{sec:toymodel}
% ======================================================================

We now present a minimal analytic model that realizes the three freezing conditions within globally unitary dynamics. The model is based on repeated system--environment collisions, a standard framework for discrete open quantum dynamics \cite{Scarani2002,Ciccarello2022}.

\subsection{Model} % NEW

The system $S$ is a single qubit, initially prepared in the pure state
\begin{equation}
|\psi_S\rangle = \cos(\alpha/2)\,|0\rangle + \sin(\alpha/2)\,|1\rangle ,
\qquad \alpha \in (0,\pi),
\label{eq:initialstate}
\end{equation}
where $\alpha$ is the Bloch polar angle; the symmetric superposition $\alpha = \pi/2$, i.e.\ $|\psi_S\rangle = (|0\rangle + |1\rangle)/\sqrt{2}$, serves as the baseline case of Table~\ref{tab:freezing} and Figs.~\ref{fig:main}--\ref{fig:scan}. The environment consists of $N$ independent qubits $E_1, \dots, E_N$, each initially in $|0\rangle$. The system interacts sequentially with each environmental qubit through a controlled rotation:
\begin{equation}
U_{SE_k} = |0\rangle\langle 0|_S \otimes I_{E_k} + |1\rangle\langle 1|_S \otimes R_y(2\theta)_{E_k},
\end{equation}
where $R_y(2\theta)|0\rangle = \cos\theta|0\rangle + \sin\theta|1\rangle$ and $\theta \in (0, \pi/2)$ is the coupling strength. Each $E_k$ represents a distinct environmental degree of freedom (field mode, scattered photon, etc.). The controlled rotation correlates the system state with the environmental excitation, and each environmental qubit departs after interaction, modeling the physical inaccessibility of dispersed environmental degrees of freedom.

After $n$ collisions, the global state is
\begin{equation}
\begin{aligned}
|\Psi^{(n)}\rangle ={}& \cos(\alpha/2)\,|0\rangle|0\rangle^{\otimes n}
+ \sin(\alpha/2)\,|1\rangle|\varepsilon\rangle^{\otimes n}, \\
& |\varepsilon\rangle = \cos\theta|0\rangle + \sin\theta|1\rangle .
\end{aligned}
\label{eq:globalstate}
\end{equation}
This state remains pure at all times. The conditional environmental overlap $\langle 0|\varepsilon\rangle = \cos\theta$ controls all subsequent quantities.

This collisional model is an \emph{illustrative open-system realization} of the freezing conditions, not a derivation of them from relativistic quantum field theory, and the detector mapping described next is heuristic. While formulated here as a discrete qubit circuit, the model serves as a rigorously tractable stand-in for a localized particle detector (e.g., an Unruh--DeWitt detector) interacting sequentially with a relativistic quantum field \cite{Scarani2002,Ciccarello2022}. In this operational mapping, the system $S$ represents the internal degrees of freedom of a localized detector parameterized by its proper time $\tau$. Each environmental qubit $E_k$ corresponds to a localized wavepacket mode of the external field that interacts briefly with the detector and subsequently propagates away at the speed of light, modeling the causal inaccessibility of radiated field quanta. We employ this discrete formulation because evaluating channel-level irrecoverability---specifically, the diamond-norm distance to the nearest pointer measure-and-prepare channel (Condition~C)---in infinite-dimensional continuous QFT is mathematically highly intractable, whereas the collisional model admits exact analytical solutions that clearly demonstrate the separation of the anchoring conditions.

\subsection{Condition (A): Local Classicalization} % NEW

The reduced state of $S$ after $n$ collisions is
\begin{equation}
\rho_S^{(n)} = \begin{pmatrix} \cos^{2}(\alpha/2) & \tfrac{1}{2}\sin\alpha\,(\cos\theta)^{n} \\[3pt] \tfrac{1}{2}\sin\alpha\,(\cos\theta)^{n} & \sin^{2}(\alpha/2) \end{pmatrix}.
\end{equation}
The trace-norm distance to the dephased state is
\begin{equation}
\varepsilon_d(n;\alpha) = \big\|\rho_S^{(n)} - \Delta\big(\rho_S^{(n)}\big)\big\|_1 = |\sin\alpha|\,|\cos\theta|^{n},
\label{eq:condA}
\end{equation}
which depends on the prepared state through the coherence factor $|\sin\alpha| \le 1$, maximal for the symmetric input $\alpha = \pi/2$. Condition (A) is satisfied at
\begin{equation}
n_A = \big\lceil \ln\!\big(\varepsilon_d/|\sin\alpha|\big) \,/\, \ln|\cos\theta| \big\rceil
\end{equation}
when $|\sin\alpha| > \varepsilon_d$, and at $n_A = 0$ otherwise: an almost-classical preparation requires no decoherence at all.

\subsection{Condition (B): Redundant Record Formation} % NEW

Each environmental qubit carries the per-step mutual information $I(X:E_k)$ of Eq.~(\ref{eq:bandwidth}). For weak coupling ($\theta \ll 1$), a single fragment may carry insufficient information. We therefore consider groups of $m$ consecutive fragments $F_m = (E_{j+1}, \dots, E_{j+m})$, whose conditional states have overlap $(\cos\theta)^m$, giving
\begin{equation}
I(X:F_m) = H_{\mathrm{bin}}\!\left(\frac{1 + |\cos\theta|^m}{2}\right).
\label{eq:IXFm}
\end{equation}
The minimum group size $m_\delta$ is defined as the smallest $m$ such that $I(X:F_m) \geq (1-\delta)H(X)$, and the grouped redundancy is $R_\delta(n) = \lfloor n/m_\delta \rfloor$. Condition (B) is satisfied at $n_B = m_\delta \cdot R_*$.

The group-size threshold admits a closed form. Defining the overlap tolerance $\eta_\delta \equiv 2\,H_{\mathrm{bin}}^{-1}(1-\delta) - 1$, with the inverse taken on the branch $[\tfrac{1}{2}, 1]$, the requirement $I(X:F_m) \ge (1-\delta)H(X)$ is equivalent to $|\cos\theta|^m \le \eta_\delta$, so that
\begin{equation}
m_\delta = \left\lceil \frac{\ln(1/\eta_\delta)}{\ln(1/|\cos\theta|)} \right\rceil ,
\qquad
\eta_\delta \approx \sqrt{2\,\delta \ln 2} \quad (\delta \lesssim 0.2),
\label{eq:mdelta}
\end{equation}
the small-$\delta$ approximation holding to within a few percent. Equations~(\ref{eq:bandwidth}) and (\ref{eq:IXFm}) are evaluated for the symmetric input $\alpha = \pi/2$, for which $H(X) = 1$~bit; for general $\alpha$ the construction carries over with $H(X) = H_{\mathrm{bin}}\!\big(\cos^{2}(\alpha/2)\big)$ and the Holevo information of the two conditional pure states with priors $\cos^{2}(\alpha/2)$ and $\sin^{2}(\alpha/2)$, which likewise admits a closed form. Since the comparison between conditions (A) and (C) below does not involve condition (B), we retain the symmetric-input expressions here.

\subsection{Condition (C): Irreversibility Against Recovery} % NEW

The induced channel after $n$ collisions is a dephasing channel in the pointer basis with parameter $\lambda_n = |\cos\theta|^n$: populations are preserved, while coherences are multiplied by $\lambda_n$, irrespective of the input. Its diamond-norm distance to the set $\mathcal{C}_X$ of pointer measure-and-prepare channels is exactly $\lambda_n$ (\ref{app:channel}):
\begin{equation}
\varepsilon_{eb}(n) = \min_{\Phi \in \mathcal{C}_X} \|\mathcal{N}_n - \Phi\|_\diamond = |\cos\theta|^n,
\label{eq:condC}
\end{equation}
attained by the completely dephasing map $\Delta$. For the unrestricted entanglement-breaking set the distance is bracketed as $\lambda_n/2 \le \min_{\Phi \in \mathrm{EB}} \|\mathcal{N}_n - \Phi\|_\diamond \le \lambda_n$ (\ref{app:channel}), so none of the conclusions below depend on which set is used. Because (\ref{eq:condC}) is a property of the channel, it is independent of the initial state: condition (C) certifies that \emph{no} input---however coherent, and even when entangled with an external reference---can retain quantum correlations through the interaction. Condition (C) is satisfied at $n_C = \lceil \ln\varepsilon_{eb} / \ln|\cos\theta| \rceil$.

\paragraph{Remark: Two Notions of Separability}
Condition (C) concerns the entanglement-\emph{transmitting} capability of the reduced channel $\mathcal{N}_{S\to S}$. It implies approximate separability of the channel's Choi state---the joint state of the channel \emph{output} with an external reference that purifies the input---and within the model is implied by it up to a factor of two (\ref{app:channel}). This is logically independent of the separability of the system--environment state (\ref{eq:globalstate}); in fact, the two are anticorrelated. As $\theta \to \pi/2$, the conditional environment states become orthogonal, the global state (\ref{eq:globalstate}) approaches a maximally entangled (GHZ-type) state between $S$ and $E$, and the induced channel approaches the completely dephasing map $\Delta \in \mathcal{C}_X$: the environment then acts as a perfect pointer measurement. As $\theta \to 0$, the global state becomes a product state, no record forms, and the channel approaches the identity, maximally distant from $\mathcal{C}_X$.

The anticorrelation is enforced by the monogamy of entanglement: the more strongly $S$ entangles with $E$ (record formation), the less entanglement $S$ can retain with, or transmit to, any external reference (entanglement breaking). The same tradeoff is expressed quantitatively by the information--disturbance tradeoff and the continuity of the Stinespring dilation \cite{KSW2008}: a channel close to the identity has a complementary channel close to a constant map, depositing no record in $E$, and conversely. Record formation in the environment and irrecoverability of the system channel are thus two descriptions of one physical process, evaluated across two different bipartitions.

For the baseline presentation in Table~\ref{tab:freezing} and Figs.~\ref{fig:main}--\ref{fig:scan} we retain the illustrative tolerances $\varepsilon_d = 0.05$ and $\varepsilon_{eb} = 0.005$. The separation between conditions (A) and (C), however, does not rest on this choice. Comparing (\ref{eq:condA}) and (\ref{eq:condC}), up to integer rounding,
\begin{equation}
n_C - n_A \;\approx\; \frac{\ln(\varepsilon_d/\varepsilon_{eb}) + \ln(1/|\sin\alpha|)}{\ln(1/|\cos\theta|)} ,
\label{eq:gap}
\end{equation}
which is nonnegative whenever $\varepsilon_{eb} \le \varepsilon_d$ and remains strictly positive for generic initial states ($\alpha \ne \pi/2$) even at a \emph{common} threshold $\varepsilon_d = \varepsilon_{eb}$. The reason is structural rather than parametric: (A) certifies the coherence actually present in the prepared state, whereas (C) certifies the worst case over all inputs, and the two certifications coincide only for the maximally coherent, worst-case input $\alpha = \pi/2$. This is made precise in Proposition~\ref{prop:AC} and Fig.~\ref{fig:robust} below; within the window $n_A \le n < n_C$, the system appears classical but is not yet certifiably irrecoverable.

\subsection{Freezing Time, Unified Certification Formula, and Anchoring Regimes}

The freezing time is
\begin{equation}
t_* = n_* \cdot \tau, \quad n_* = \max(n_A,\, n_B,\, n_C),
\end{equation}
where $\tau$ is the collision interval. Substituting the closed forms for $n_A$, $n_B = m_\delta R_*$, and $n_C$ derived above, the freezing time is governed by a single rate--depth decomposition. Let
\begin{equation}
\Lambda \;\equiv\; \ln\frac{1}{|\cos\theta|} \;=\; -\ln\big|\langle 0|\varepsilon\rangle\big|
\end{equation}
denote the per-collision decay exponent of the conditional environmental overlap: one and the same microscopic rate drives all three certifications. Defining the certification depths
\begin{equation}
\mathcal{D}_A \equiv \ln\frac{|\sin\alpha|}{\varepsilon_d}, \qquad
\mathcal{D}_C \equiv \ln\frac{1}{\varepsilon_{eb}}, \qquad
d_B \equiv \ln\frac{1}{\eta_\delta},
\end{equation}
the freezing step takes the exact form
\begin{equation}
n_* \;=\; \max\!\Big( \big\lceil \mathcal{D}_A/\Lambda \big\rceil_{+} ,\;
\big\lceil \mathcal{D}_C/\Lambda \big\rceil ,\;
R_*\,\big\lceil d_B/\Lambda \big\rceil \Big),
\label{eq:master}
\end{equation}
where $\lceil x \rceil_{+} \equiv \max(0, \lceil x \rceil)$. Equation~(\ref{eq:master}) makes the role of every parameter transparent: $\varepsilon_d$ sets the depth demanded of \emph{local classical appearance}, $\varepsilon_{eb}$ the depth of \emph{quantum irrecoverability}, $R_*$ the demanded \emph{public accessibility}, $\delta$ (through $\eta_\delta$) the demanded \emph{per-record fidelity}, and the coupling $\theta$ sets only the clock rate $\Lambda$; the freezing step is the largest of the certification requirements. When $d_B/\Lambda \gtrsim 1$ (weak-to-intermediate coupling, $m_\delta \ge 2$), the ceilings may be dropped,
\begin{equation}
n_* \;\approx\; \frac{1}{\Lambda}\,\max\!\big( \mathcal{D}_A,\; \mathcal{D}_C,\; R_*\, d_B \big),
\label{eq:mastersmooth}
\end{equation}
whereas at strong coupling the redundancy term saturates its counting floor $n_B \ge R_*$---at least one collision is required per independent record---so that $n_* \to \max(1, R_*)$ as $\theta \to \pi/2$, and (\ref{eq:mastersmooth}) may underestimate $n_B$ by up to $R_* - 1$ collisions. In the opposite limit, $\Lambda \approx \theta^{2}/2$, so $t_* \approx 2\tau\,\max(\mathcal{D}_A, \mathcal{D}_C, R_* d_B)/\theta^{2}$ diverges as $\theta \to 0$: well-isolated systems never anchor.

Two structural consequences of (\ref{eq:master}) are threshold-independent.

\begin{prop}[Classicalization precedes irrecoverability]
\label{prop:AC}
For all $\theta \in (0, \pi/2)$, all initial states $\alpha \in (0,\pi)$, and all tolerances with $\varepsilon_{eb} \le \varepsilon_d$,
\begin{equation}
n_A \;\le\; n_C .
\end{equation}
\end{prop}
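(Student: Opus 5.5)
The plan is to compare the two closed-form certification steps directly and to use the monotonicity of the ceiling function. Write $\Lambda = \ln(1/|\cos\theta|)$, which is strictly positive for $\theta\in(0,\pi/2)$. By (\ref{eq:condA}) the coherence diagnostic is $\varepsilon_d(n;\alpha)=|\sin\alpha|\,e^{-n\Lambda}$, and by (\ref{eq:condC}) the channel diagnostic is $\varepsilon_{eb}(n)=e^{-n\Lambda}$. Both are decreasing in $n$. Hence $n_A$ is the least integer $n\ge 0$ with $|\sin\alpha|e^{-n\Lambda}\le\varepsilon_d$, and $n_C$ is the least integer $n\ge 0$ with $e^{-n\Lambda}\le\varepsilon_{eb}$. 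These are exactly $\lceil\mathcal{D}_A/\Lambda\rceil_+$ and $\lceil\mathcal{D}_C/\Lambda\rceil$ in (\ref{eq:master}).

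The core step is a one-line pointwise domination. Since $|\sin\alpha|\le 1$ and $\varepsilon_{eb}\le\varepsilon_d$, for every $n$ we have
\begin{equation}
\varepsilon_d(n;\alpha)=|\sin\alpha|\,e^{-n\Lambda}\le e^{-n\Lambda}=\varepsilon_{eb}(n).
\end{equation}
At $n=n_C$ the channel bound gives $e^{-n_C\Lambda}\le\varepsilon_{eb}\le\varepsilon_d$, so condition (A) already holds at $n_C$. Because $n_A$ is the \emph{earliest} step at which (A) holds, it follows that $n_A\le n_C$.

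Equivalently, in terms of depths, $\mathcal{D}_A=\ln|\sin\alpha|+\ln(1/\varepsilon_d)\le\ln(1/\varepsilon_{eb})=\mathcal{D}_C$. Monotonicity of $x\mapsto\lceil x\rceil_+$ then gives $\lceil\mathcal{D}_A/\Lambda\rceil_+\le\max(0,\lceil\mathcal{D}_C/\Lambda\rceil)$.

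The only point needing care is the bookkeeping of the positive-part ceiling and of degenerate cases. If $|\sin\alpha|\le\varepsilon_d$, then $n_A=0$ and the inequality is trivial. Otherwise $\mathcal{D}_A>0$, and therefore $\mathcal{D}_C\ge\mathcal{D}_A>0$. This ensures $\lceil\mathcal{D}_C/\Lambda\rceil$ is a nonnegative integer and the $+$ subscript is harmless. The case $\varepsilon_{eb}\ge 1$ is handled the same way, since then $n_C=0$ and also $n_A=0$.

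Alternatively, one can bypass the model formulas and invoke the general implication (C)$\Rightarrow$(A) of \ref{app:channel}. That route, however, yields (A) only at tolerance $2\varepsilon_{eb}$, which would require $\varepsilon_{eb}\le\varepsilon_d/2$. So I would rely on the exact model expressions above. Those expressions give the sharp hypothesis $\varepsilon_{eb}\le\varepsilon_d$ because the model channel's diamond distance and the state coherence share the same factor $|\cos\theta|^n$. This is the step I would double-check, namely that (\ref{eq:condC}) is the exact distance rather than a bound.
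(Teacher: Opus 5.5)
Your argument is correct and is essentially the paper's proof: any step $n$ with $|\cos\theta|^{n}\le\varepsilon_{eb}$ also satisfies $|\sin\alpha|\,|\cos\theta|^{n}\le\varepsilon_{eb}\le\varepsilon_d$, so the steps certifying (C) are contained in those certifying (A) and the minima obey $n_A\le n_C$. Your depth/ceiling reformulation is accurate, and so is your remark that routing through the appendix's general (C)$\Rightarrow$(A) implication would only give tolerance $2\varepsilon_{eb}$, but neither is needed beyond the set-containment step.
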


\begin{proof}
If $n$ certifies (C), i.e.\ $|\cos\theta|^{n} \le \varepsilon_{eb}$, then $|\sin\alpha|\,|\cos\theta|^{n} \le \varepsilon_{eb} \le \varepsilon_d$, so the same $n$ certifies (A). The set of steps certifying (C) is therefore contained in the set certifying (A), and their minima obey $n_A \le n_C$.
\end{proof}

\noindent
At a common threshold $\varepsilon_d = \varepsilon_{eb} = \varepsilon$, the gap (\ref{eq:gap}) reduces to $n_C - n_A \approx \ln(1/|\sin\alpha|)/\Lambda$, strictly positive for every $\alpha \ne \pi/2$ and growing without bound as the preparation approaches a pointer state ($n_A \to 0$ while $n_C$ is unchanged)---the cleanest illustration that (A) certifies a \emph{state} while (C) certifies a \emph{channel}. An immediate corollary is that condition (A) is \emph{never} the binding constraint: the anchoring time is always set by irrecoverability or by redundancy. This sharpens the claim that decoherence does not by itself constitute event formation: classical appearance is provably never the last certificate to be issued. Proposition~\ref{prop:AC} and the exactness of (\ref{eq:master}) were additionally confirmed by an independent numerical scan over $5 \times 10^{4}$ randomly sampled parameter sets $(\theta, \alpha, \varepsilon_d, \varepsilon_{eb}, R_*, \delta)$, with no violations.

Second, which of the two remaining conditions binds is governed by the dimensionless ratio
\begin{equation}
\chi \;\equiv\; \frac{R_*\, d_B}{\mathcal{D}_C} \;=\; \frac{R_*\, \ln(1/\eta_\delta)}{\ln(1/\varepsilon_{eb})} ,
\label{eq:chi}
\end{equation}
which contains neither $\theta$ nor $\alpha$: the coupling sets only the clock, while the tolerances set only the certification task. For $\chi < 1$, anchoring is \emph{irrecoverability-limited} ($n_C$ binds); for $\chi > 1$, it is \emph{redundancy-limited} ($n_B$ binds); for $|R_* d_B - \mathcal{D}_C| \lesssim R_* \Lambda$, the two certificates are issued within roughly one fragment-group step of each other and integer effects decide (a \emph{crossover band}, not a sharp transition). Which condition limits anchoring is thus a property of the certification task, not of the microscopic dynamics: different tolerance choices do reorder $n_B$ relative to $n_C$, and (\ref{eq:master}) specifies exactly how, while the dynamical content of the model---the exponential decay law and the rate $\Lambda$---is independent of these choices. The freezing time $t_* = \max(n_A, n_B, n_C)\,\tau$ is well defined in every regime precisely because its definition presupposes no ordering.

Table~\ref{tab:freezing} summarizes the baseline results (symmetric input $\alpha = \pi/2$) for representative coupling strengths.

\begin{table*}[t]
\centering
\caption{Freezing conditions for representative coupling strengths, with symmetric input $\alpha = \pi/2$ and $\varepsilon_d = 0.05$, $\varepsilon_{eb} = 0.005$, $R_* = 10$, $\delta = 0.2$ (giving $\chi \approx 1.3$: redundancy-limited regime).}
\label{tab:freezing}
\begin{tabular}{cccccccc}
\toprule
$\theta$ & $m_\delta$ & $I$/frag & $n_A$ & $n_C$ & $n_B$ & $n_*$ & A--C gap \\
\midrule
$15^\circ$ & 20 & 0.124 & 87 & 153 & 200 & 200 & 66 \\
$30^\circ$ & 5 & 0.355 & 21 & 37 & 50 & 50 & 16 \\
$45^\circ$ & 2 & 0.601 & 9 & 16 & 20 & 20 & 7 \\
$60^\circ$ & 1 & 0.811 & 5 & 8 & 10 & 10 & 3 \\
$75^\circ$ & 1 & 0.951 & 3 & 4 & 10 & 10 & 1 \\
\bottomrule
\end{tabular}
\end{table*}

Three coupling regimes are evident:
\begin{itemize}
    \item Weak coupling ($\theta \lesssim 30^\circ$): Grouped fragments are needed ($m_\delta \gg 1$), redundancy dominates ($n_B \gg n_C > n_A$), and freezing is slow. This corresponds to quantum systems that maintain coherence over many interaction steps.
    \item Intermediate coupling ($30^\circ \lesssim \theta \lesssim 60^\circ$): All three certificates are issued at well-separated steps ($n_A < n_C < n_B$ for these tolerances). This mesoscopic regime is where the distinction between decoherence and event anchoring is most operationally relevant.
    \item Strong coupling ($\theta \gtrsim 60^\circ$): Single fragments suffice ($m_\delta = 1$), and $n_*$ is limited mainly by the redundancy threshold $R_*$. This corresponds to macroscopic measurement-like interactions where anchoring occurs rapidly.
\end{itemize}

Figure~\ref{fig:main} illustrates the freezing dynamics for $\theta = \pi/3$. Panel~(a) shows the exponential decay $|\cos\theta|^n$ evaluated against the classicalization threshold $\varepsilon_d = 0.05$ and the stricter irrecoverability threshold $\varepsilon_{eb} = 0.005$, yielding the certification gap $n_C - n_A = 3$ for the symmetric input; the dashed curve shows the state-level decay $|\sin\alpha|\,|\cos\theta|^{n}$ for a generic preparation ($\alpha = \pi/6$), which is certified classical earlier still. Panel~(b) shows the linear growth of grouped redundancy $R_\delta(n)$, reaching $R_* = 10$ at $n_B = 10$. Panel~(c) displays the coherence drain and system entropy growth, confirming that coherence is effectively exhausted within the first few collisions while later interactions build classical redundancy. Panel~(d) presents the three-condition freezing diagram, showing the temporal ordering $n_A = 5 < n_C = 8 < n_B = 10 = n_*$.

\begin{figure*}[t]
\centering
\includegraphics[width=\textwidth]{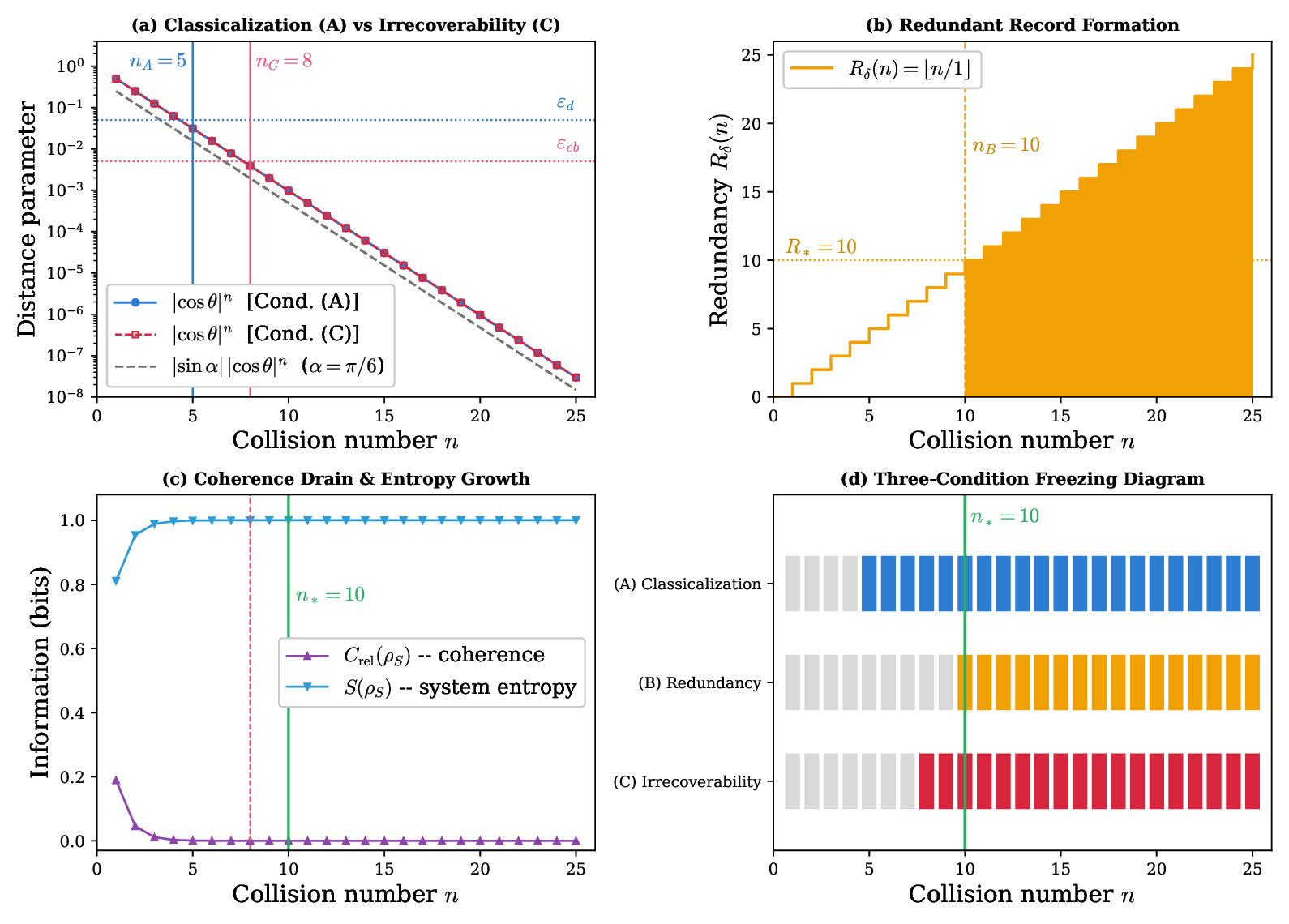}
\caption{Freezing dynamics for $\theta = 60^\circ$ ($|\cos\theta| = 0.5$), with $\varepsilon_d = 0.05$, $\varepsilon_{eb} = 0.005$, $R_* = 10$, $\delta = 0.2$. (a)~Classicalization (A) and irrecoverability (C) share the same exponential decay law but are evaluated against different thresholds; for the symmetric input $\alpha = \pi/2$ shown here (solid), the resulting gap of 3 collision steps is set by the ratio of the two illustrative tolerances. The gray dashed curve shows the state-level decay $|\sin\alpha|\,|\cos\theta|^{n}$ for a generic preparation ($\alpha = \pi/6$), for which condition (A) is certified strictly earlier than condition (C) even at a common threshold; the threshold-independent statement of the A--C gap is Proposition~\ref{prop:AC} and Fig.~\ref{fig:robust}(a,b). (b)~Grouped redundancy $R_\delta(n) = \lfloor n/m_\delta \rfloor$ grows linearly; for this coupling strength, single fragments suffice ($m_\delta = 1$). (c)~Relative entropy of coherence drops to near zero within the first few collisions, while system entropy saturates at 1~bit. (d)~Three-condition freezing diagram showing that classicalization (A) is satisfied first, irrecoverability (C) second, and redundancy (B) last, with the event anchored at $n_* = 10$.}
\label{fig:main}
\end{figure*}

Figure~\ref{fig:scan} shows the dependence of the freezing conditions on coupling strength $\theta$ across the full parameter range. Panel~(a) shows that for these tolerances ($\chi \approx 1.3 > 1$), $n_B$ dominates $n_*$ across all coupling strengths; Fig.~\ref{fig:robust}(c) delineates when redundancy is, and is not, the binding constraint. Panel~(b) shows the transition from the grouped-fragment regime ($m_\delta > 1$, weak coupling) to the single-fragment regime ($m_\delta = 1$, strong coupling). Panel~(c) displays the certification gap $n_C - n_A$, which is largest in the weak-coupling regime and decreases monotonically with increasing $\theta$, in accordance with the closed form $\ln(\varepsilon_d/\varepsilon_{eb})/\Lambda$ of (\ref{eq:gap}); the vertical axes of panels~(a) and (c) are clipped, as $n_B$, $n_*$, and the gap all diverge as $\theta \to 0$.

\begin{figure*}[t]
\centering
\includegraphics[width=\textwidth]{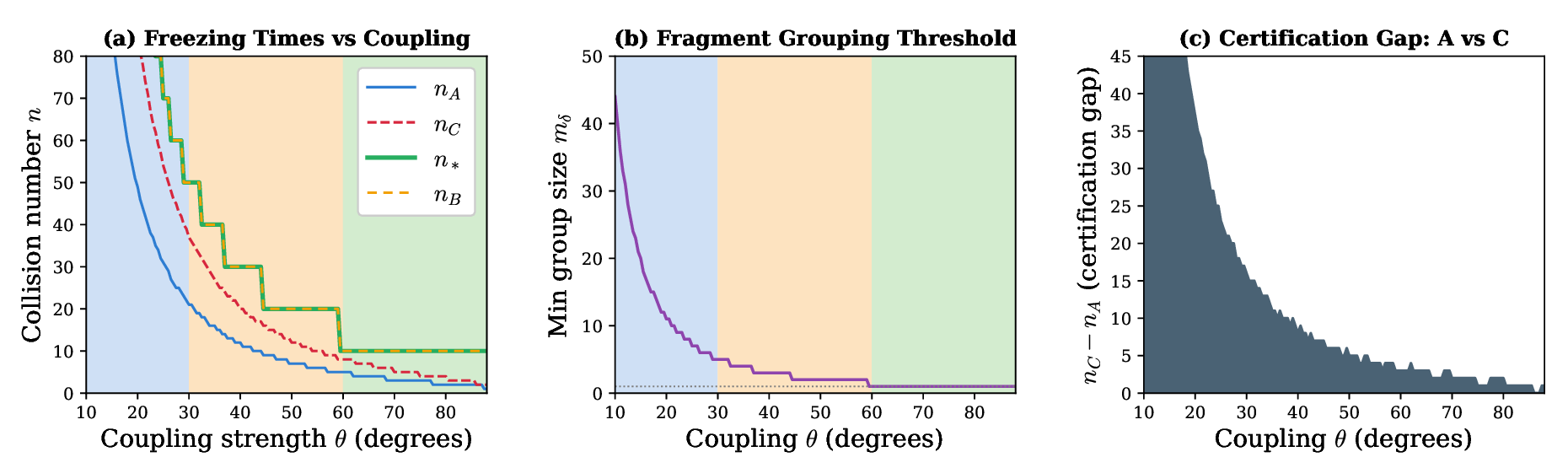}
\caption{Parameter scan across coupling strengths, with $\varepsilon_d = 0.05$, $\varepsilon_{eb} = 0.005$, $R_* = 10$, $\delta = 0.2$. (a)~Freezing times $n_A$, $n_C$, $n_B$, and $n_*$ as functions of $\theta$; for these tolerances $n_B = n_*$ throughout (dashed overlay). Three regimes are shaded: weak coupling (blue, $\theta \lesssim 30^\circ$), intermediate (orange, $30^\circ \lesssim \theta \lesssim 60^\circ$), and strong coupling (green, $\theta \gtrsim 60^\circ$). (b)~Minimum group size $m_\delta$ required for a fragment group to carry $(1-\delta)H(X)$ bits of information. (c)~Certification gap $n_C - n_A$ between state-level classicalization and channel-level irrecoverability, monotonically decreasing in $\theta$. The vertical axes of panels~(a) and (c) are clipped: all certification steps diverge as $\theta \to 0$ (well-isolated systems never anchor).}
\label{fig:scan}
\end{figure*}

Across Table~\ref{tab:freezing}, the ordering $n_A < n_C < n_B = n_*$ holds. Proposition~\ref{prop:AC} shows that the first inequality is universal and threshold-independent, whereas the position of $n_B$ relative to $n_C$ is genuinely regime-dependent: it is governed by the ratio $\chi$ of (\ref{eq:chi})---here $\chi \approx 1.3$, redundancy-limited---and encodes which physical resource limits anchoring, as mapped in Fig.~\ref{fig:robust}(c). The conclusion that survives every admissible parameter choice is therefore not a fixed total ordering but the following: the system looks classical no later than it becomes certifiably irrecoverable, and full anchoring awaits whichever channel-level certificate---irrecoverability or redundancy---is the more demanding. This confirms the paper's central claim that decoherence does not by itself constitute event formation.

Figure~\ref{fig:robust} establishes that this certification structure is a property of the model rather than of the tolerance choices; its three panels are three projections of the single formula (\ref{eq:master}). Panels~(a,b) evaluate the gap $n_C - n_A$ at a \emph{common} threshold $\varepsilon_d = \varepsilon_{eb} = 0.005$: the gap follows the closed form $\ln(1/|\sin\alpha|)/\Lambda$ of (\ref{eq:gap}), vanishes only at the worst-case input $\alpha = \pi/2$, and grows without bound for nearly classical preparations. Panel~(c) maps the binding constraint in the $(\delta, R_*)$ plane: the boundary between the irrecoverability-limited and redundancy-limited regimes follows the exact staircase $R_* = n_C/m_\delta$, tracked by the smooth curve $\chi = 1$ of (\ref{eq:chi}); by Proposition~\ref{prop:AC}, condition (A) never binds, so the diagram is independent of $\varepsilon_d$ for any $\varepsilon_d \ge \varepsilon_{eb}$.

\begin{figure*}[t]
\centering
\includegraphics[width=\textwidth]{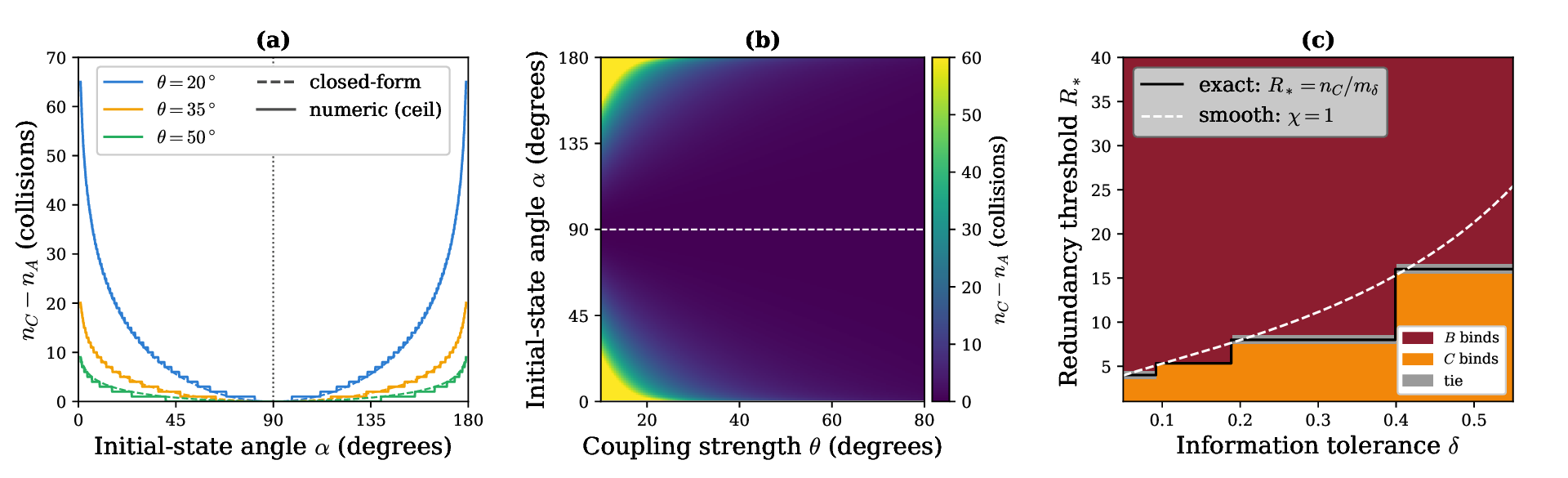}
\caption{Threshold robustness of the certification structure; all panels are projections of Eq.~(\ref{eq:master}). (a)~Certification gap $n_C - n_A$ at a common threshold $\varepsilon_d = \varepsilon_{eb} = 0.005$ as a function of the initial-state angle $\alpha$, for three coupling strengths. Solid curves: exact integer gaps; dashed curves: closed form $\ln(1/|\sin\alpha|)/\Lambda$. The unrounded closed-form gap vanishes only at the worst-case input $\alpha = \pi/2$ (dotted line) and diverges as the preparation approaches a pointer state ($n_A \to 0$ with $n_C$ unchanged). (b)~The same gap across the $(\theta, \alpha)$ plane; the white dashed line marks $\alpha = \pi/2$; the color scale is clipped at 60 collisions. (c)~Anchoring-regime diagram in the $(\delta, R_*)$ plane at $\theta = 45^\circ$, $\alpha = \pi/2$, $\varepsilon_{eb} = 0.005$: orange, irrecoverability-limited ($n_C$ binds); dark red, redundancy-limited ($n_B$ binds); gray, crossover (tie). Black staircase: exact boundary $R_* = n_C/m_\delta$; white dashed curve: smooth boundary $\chi = 1$ of Eq.~(\ref{eq:chi}). Condition (A) never binds (Proposition~\ref{prop:AC}), so the diagram is independent of $\varepsilon_d \ge \varepsilon_{eb}$; the regime boundaries are crossovers of width one fragment group, not sharp transitions.}
\label{fig:robust}
\end{figure*}

\subsection{Compatibility with Unitary Dynamics} % NEW

The global state $|\Psi^{(n)}\rangle$ remains pure at all times. All three conditions are properties of the reduced description:
\begin{itemize}
    \item (A) is a property of $\mathrm{Tr}_E(\rho_{SE})$,
    \item (B) is a property of correlations $I(X:E_k)$ within the global pure state,
    \item (C) is a property of the effective channel $\mathrm{Tr}_E \circ \mathcal{U}_{SE}$.
\end{itemize}
Irreversibility is effective and local: it describes the practical impossibility of an observer with access only to $S$ (or a small number of fragments) recovering pre-interaction coherence. The Poincar\'{e} recurrence time grows exponentially with $N$, making reversal operationally impossible for macroscopic environments. No violation of global unitarity is required or implied.

% ======================================================================
\section{Constructing Spacetime History}
\label{sec:constructing_history}
% ======================================================================

Having defined local generative freezing and established criteria for event anchoring, we now turn to the construction of spacetime history. The central claim of this framework is that, at the level of irreversibly anchored records, physical history is organized not as a continuous trajectory generated by quantum dynamics but as a partially ordered structure of anchored events---a statement about the operational organization of recorded history rather than about the fine structure of spacetime itself. This is not a claim that spacetime itself is fundamentally discrete.

\subsection{History as a Partially Ordered Set of Events}

We identify physical history $\History$ with a partially ordered set of events,
\begin{equation}
\History = \{\Event_a\},
\end{equation}
where the ordering relation $\Event_a \prec \Event_b$ holds if and only if $\Event_a$ can causally influence $\Event_b$ according to relativistic causality. This ordering is invariant under coordinate transformations and does not depend on any particular foliation of spacetime.

In this picture, the fundamental elements of history are not states or trajectories, but event-nodes. Once an event has been anchored---that is, once local generative freezing has occurred---it becomes a fixed element of the causal past of subsequent events. The irreversibility of history is thus attributed not to the breakdown of unitary dynamics, but to the irreversible embedding of events into the causal structure.

\subsection{Quantum Dynamics Between Events}

Between successive events, quantum systems evolve according to standard unitary dynamics. Quantum field theory governs the propagation of amplitudes, correlations, and interactions in the intervals separating event-nodes. These intervals need not be sharply bounded; in general, many microscopic interactions may occur without producing an event in the sense defined here.

The apparent continuity of spacetime and field evolution arises from the dense distribution of events in typical macroscopic environments. In regimes where event density is high, the causal skeleton formed by events can be approximated by a smooth spacetime manifold, and effective field theories provide accurate descriptions. In contrast, in regimes with sparse event formation---such as isolated quantum systems---the effective discreteness of recorded history becomes operationally relevant.

\subsection{Relation to Causal Set and Other Approaches} % REVISED TITLE

This event-centered conception of history bears structural resemblance to causal set theory \cite{Bombelli1987,Sorkin2003}, in that both emphasize a partially ordered structure of spacetime occurrences. However, the present framework differs in a crucial respect: events are not postulated as fundamental discreta, but are dynamically generated through quantum interactions subject to information-theoretic constraints. The framework thus offers a candidate micro-physical generation mechanism for the discrete nodes that causal set theory takes as axiomatic.

This distinction is significant. By deriving events from quantum interactions under finite information-processing capacity, the present approach avoids introducing discreteness as a fundamental axiom. Instead, an effective discreteness arises contextually, depending on interaction strength, environmental coupling, and information flow. The causal skeleton is not a pre-existing lattice but an emergent structure assembled through physical processes.

More broadly, the framework may be situated within a family of approaches that seek to understand spacetime structure in terms of quantum information. Holographic dualities and the geometry-from-entanglement program \cite{VanRaamsdonk2010} have demonstrated that spacetime connectivity can be related to entanglement structure. Loop quantum gravity and causal dynamical triangulations emphasize discrete or combinatorial structures underlying spacetime. The present proposal does not constitute a direct technical realization within any of these programs, but shares with them the view that aspects of effective spacetime structure may arise from underlying quantum processes. The specific contribution here is to propose operational criteria for when such processes produce irreversible spacetime anchors.

% ======================================================================
\section{Discussion}
\label{sec:discussion}
% ======================================================================
 
\subsection{Conceptual Scope: Event Anchoring versus Event Generation}
 
The framework developed in this work aims to clarify how irreversible spacetime history can emerge from fundamentally reversible quantum dynamics without modifying the unitary structure of quantum field theory. Its contribution is operational and structural rather than dynamical: it does not introduce new laws, but proposes information-theoretic criteria \cite{Wheeler1990} for when a quantum process yields a stable spacetime fact.
 
A distinction that we make explicit throughout the paper is between event generation and event anchoring. Event generation concerns how a particular outcome is first produced or selected from among alternatives, and with what probability---a question intimately connected to the Born rule, single-outcome selection, and the broader measurement problem. Event anchoring concerns a subsequent question: once a candidate outcome exists, when does it become a stable spacetime fact?
 
The present work concerns event anchoring, not a full theory of event generation. We regard anchoring as a necessary component of any eventual account of event generation, but we do not claim that it constitutes such an account on its own. In particular, the framework does not derive the Born rule, does not resolve single-outcome selection, and does not provide a full account of how a candidate outcome is first generated.
 
By event generation we mean the prior step in which a boundary interaction compresses a distributed relational structure between system and environment into a candidate local record, while leaving additional micro-information in inaccessible residual correlations. Formally, one may view this prior step schematically as a boundary compression
\begin{equation}
(\rho_{SE},\,\mathcal{I}_{\partial}) \;\longrightarrow\; (r,\,\rho_{\mathrm{res}}),
\end{equation}
where a joint system--environment state $\rho_{SE}$, mediated by the boundary information structure $\mathcal{I}_{\partial}$, yields a local record $r$ together with residual correlations $\rho_{\mathrm{res}}$ not retained in the event record. The present freezing criteria address the subsequent question: once such a candidate record $r$ exists, under what conditions does it become sufficiently classicalized, redundantly recorded, and effectively irrecoverable to function as a stable spacetime anchor?
 
In this sense, the framework reframes only one layer of the measurement problem: not how an outcome is selected, but how a candidate outcome, once present, becomes a stable spacetime fact. A fuller account of event generation remains open.

\subsection{Scope, Regimes, and Limitations}
 
The present work is primarily conceptual and foundational in nature. It does not predict deviations from standard quantum field theory in currently accessible experiments. The criteria for event anchoring are formulated in terms of quantum information-theoretic quantities that are, in principle, calculable but may be difficult to evaluate in practice for complex systems.
 
A mathematical remark is in order. Strictly speaking, perfect entanglement breaking ($\varepsilon_{eb} = 0$) might require infinite time or an infinite-dimensional environment. However, for all practical operational purposes, an exponentially small error $\varepsilon_{eb}$ is sufficient to define a physical event: once $\varepsilon_{eb}$ falls below any operationally meaningful threshold, no feasible physical process can distinguish the channel from a truly entanglement-breaking one, and the outcome is irreversibly anchored for all practical intents.
 
Although the framework does not predict deviations from standard QFT, the explicit toy model provides a concrete basis for identifying regimes where the distinction between decoherence and event anchoring becomes operationally relevant. For strong-coupling interactions, all three conditions are satisfied nearly simultaneously and anchoring occurs rapidly, consistent with the classical limit. For weaker-coupling mesoscopic interactions, the temporal separation between classicalization ($n_A$), irrecoverability ($n_C$), and full anchoring ($n_B$) becomes parametrically large, identifying a regime where the framework's distinctions could in principle be tested in platforms such as superconducting qubits or optomechanical systems. To illustrate: for strong-coupling macroscopic interactions ($\theta \sim 60^\circ$, $\tau \sim 10^{-13}\,s$ for photon scattering off a macroscopic object), the model gives $n_* = 10$ and hence $t_* \sim 10\tau \sim 10^{-12}\,s$, consistent with known decoherence timescales. For weak-coupling mesoscopic systems ($\theta \lesssim 20^\circ$), $n_*$ grows to $\sim 100$ or beyond, so that $t_*$ can be orders of magnitude longer, and the temporal gap between classicalization and full anchoring becomes parametrically large.
 
Regimes of extremely high event density---such as primordial decoherence during inflation or near black hole horizons---may amplify cumulative thermodynamic costs associated with information erasure and could, in principle, leave effective geometric signatures. We leave this possibility as an open question.

\subsection{Uniqueness of the Anchoring Criteria and Alternative Formulations}

The three conditions of Section~\ref{sec:freezing} are jointly sufficient operational certificates rather than a unique formalization of event anchoring, and it is natural to ask how the conclusions depend on the specific choices made. Conditions (A) and (B), taken together with the requirement that the conditional environment states associated with distinct outcomes become distinguishable, constitute a finite-tolerance counterpart of \emph{spectrum broadcast structure} \cite{Horodecki2015,Korbicz2021}, the structural benchmark for objectivity in quantum Darwinism: in the toy model, the conditional states of a fragment group $F_m$ have overlap $(\cos\theta)^m$, so the global state approaches spectrum broadcast structure precisely as the grouped records of Section~\ref{sec:toymodel} form.

Condition (C) likewise admits alternative formulations: one may demand instead that no recovery operation acting on $S$ restores the pre-interaction state above a fidelity threshold, with Fawzi--Renner-type bounds relating such recovery fidelities to conditional mutual information \cite{FawziRenner2015}; or one may measure the distance to the unrestricted entanglement-breaking set, which \ref{app:channel} shows to agree with our criterion to within a factor of two in the model.

Within the model, each of these alternative functionals is a monotone function of the single overlap parameter $\lambda_n = |\cos\theta|^n$; substituting them therefore amounts to a monotone reparameterization of the tolerances $(\varepsilon_d, \varepsilon_{eb}, \delta)$, which relocates numerical thresholds without altering the structure of Eq.~(\ref{eq:master}): the rate--depth factorization, Proposition~\ref{prop:AC}, and the regime classification of Fig.~\ref{fig:robust}(c) are insensitive to the particular certificates chosen. What is essential to the framework is the tripartite structure---a state-level, a record-level, and a channel-level requirement---rather than the specific functionals used to certify each.

\subsection{Localization and Covariant Extension}
 
The three anchoring conditions are formulated in terms of basis-independent information-theoretic quantities: trace-norm distance, mutual information, and diamond-norm channel distance. These are properties of states and channels, not of coordinate choices. However, a full covariant construction---specifying how the system--environment partition and pointer basis are defined across spacelike surfaces in a relativistic setting---remains an important open direction.
 
A concrete covariant recipe is suggested by the detector mapping of Section~\ref{sec:toymodel}. The system $S$ is a localized probe with worldline $\gamma$; the pointer basis is selected by the interaction Hamiltonian, in line with the commutant criterion of einselection \cite{Zurek2003}; and the environment fragments are the outgoing field modes that cross the successive light cones emanating from the coupling region, thereafter lying in the causal complement of the probe. The three conditions are then evaluated along $\gamma$ as functions of proper time, and the freezing time $t_*$ becomes a worldline-local, proper-time quantity. Different observers may describe the certification process in different coordinates, but anchored events enter history only through their causal relations, and the partial order of Section~\ref{sec:constructing_history} is frame-invariant.

Two structural obstacles deserve explicit mention. First, in algebraic QFT the local algebras of sharply bounded regions are generically of type~III, so strict partial traces and reduced density matrices are unavailable at that level; the split property, however, provides interpolating type-I factors \cite{Haag1996}, and the Fewster--Verch framework \cite{Mandrysch2024} formulates probe couplings at this level of rigor.

Second, the identification of approximately independent environment fragments must respect the vacuum correlations implied by the Reeh--Schlieder theorem; we expect the wavepacket localization of emitted quanta, as in the heuristic mapping of Section~\ref{sec:toymodel}, to supply the appropriate finite-tolerance notion of fragment independence. The repeated-collision model is a minimal analytic example meant to demonstrate that the criteria are realizable, not the final relativistic formulation; recent work on quantum reference frames \cite{Vilasini2025} provides further tools that may be relevant to a full covariant construction.

% ======================================================================
\section{Conclusion}
\label{sec:conclusion}
% ======================================================================

We have argued that an underappreciated aspect of the tension between quantum field theory and general relativity involves their descriptive primitives. GR is naturally organized around spacetime events and causal ordering, while QFT does not by itself specify when a quantum process becomes a definite spacetime fact.

We have proposed that this gap can be bridged by treating events as emergent structures anchored through local quantum--environment interactions under finite information-processing capacity. By introducing the notion of local generative freezing, we provided operational, information-theoretic criteria under which quantum potentiality condenses into spacetime facticity. An explicit repeated-collision toy model demonstrates that all three anchoring conditions can be satisfied within globally unitary dynamics, with a transparent and threshold-robust certification structure: classicalization is certified no later than channel-level irrecoverability for any admissible tolerances---strictly earlier for generic initial states---while the completion of anchoring is set by whichever requirement, irrecoverability or redundancy, is the more demanding (Proposition~\ref{prop:AC} and the closed form (\ref{eq:master})).

Spacetime history is then constructed as a partially ordered set of such events, forming a frozen causal skeleton upon which effective quantum field descriptions operate. The apparent continuity of spacetime and classical causality emerges as a coarse-grained approximation valid in regimes of high event density, while the record-level structure remains effectively discrete and irreversible at the level of anchored events---an operational discreteness of the record, not a fundamental discreteness of spacetime.

The present work addresses event anchoring---the conditions under which a candidate outcome becomes a stable spacetime fact---while leaving the question of event generation (outcome selection, probability assignment) as an important open problem. By clarifying the operational status of events and their role in constructing spacetime history, this work offers a candidate interface between quantum field theory and relativistic spacetime, grounded in an explicit model and delimited in its scope.

\section*{Acknowledgments}
The author thanks the anonymous reviewers for their constructive comments,
which materially strengthened the central claims of this work.

\section*{Data availability}

All results reported here follow from the closed-form expressions given in the text. The scripts used to generate Figs.~\ref{fig:main}--\ref{fig:robust} and to perform the $5\times10^{4}$-sample numerical verification of Proposition~\ref{prop:AC} and Eq.~(\ref{eq:master}) are available from the author upon reasonable request.

\section*{Declaration of generative AI and AI-assisted technologies in the manuscript preparation process}

During the preparation of this manuscript, the author used generative AI tools (including large language models) to assist with language refinement, structural organization, and clarity of presentation. All scientific arguments, interpretations, and conclusions were developed, reviewed, and verified by the author. The author takes full responsibility for the content of the published article.

\appendix

\section{Channel-Distance Computations and the (C)\,$\Rightarrow$\,(A) Implication}
\label{app:channel}

Throughout this appendix, $\mathcal{N}_\lambda$ denotes the qubit dephasing channel in the pointer basis, acting as $[\mathcal{N}_\lambda(\rho)]_{kk} = \rho_{kk}$ and $[\mathcal{N}_\lambda(\rho)]_{01} = \lambda\,\rho_{01}$, with $\lambda = \lambda_n = |\cos\theta|^n$ in the model of Section~\ref{sec:toymodel}; $\Delta = \mathcal{N}_0$ is the completely dephasing map, and $\mathcal{C}_X$ is the set of pointer measure-and-prepare channels defined in Section~\ref{sec:freezing}.

\subsection{Distance to the pointer measure-and-prepare set}
\label{app:cx}

\emph{Upper bound.} Since $\Delta \in \mathcal{C}_X$, it suffices to compute $\|\mathcal{N}_\lambda - \Delta\|_\diamond$. For any reference system $R$ and any joint state with block decomposition $\rho_{SR} = \bigl(\begin{smallmatrix} A & B \\ B^\dagger & C \end{smallmatrix}\bigr)$ with respect to the pointer basis of $S$,
\begin{equation}
\big[(\mathcal{N}_\lambda - \Delta) \otimes \mathrm{id}_R\big](\rho_{SR})
= \lambda \begin{pmatrix} 0 & B \\ B^\dagger & 0 \end{pmatrix},
\end{equation}
whose trace norm is $2\lambda \|B\|_1$. Positivity of $\rho_{SR}$ gives $B = A^{1/2} K C^{1/2}$ for some contraction $K$, hence $\|B\|_1 \le \|A^{1/2}\|_2\,\|C^{1/2}\|_2 = \sqrt{\mathrm{tr}A \cdot \mathrm{tr}C} \le \tfrac{1}{2}$, with equality for a maximally entangled input. Therefore $\|\mathcal{N}_\lambda - \Delta\|_\diamond = \lambda$.

\emph{Lower bound.} Every $\Phi \in \mathcal{C}_X$ outputs pointer-diagonal states for every input. Taking the input $|+\rangle\langle+|$ with $|+\rangle = (|0\rangle + |1\rangle)/\sqrt{2}$, the output $\mathcal{N}_\lambda(|+\rangle\langle+|)$ has off-diagonal element $\lambda/2$, and its trace distance to any diagonal state $\sigma$ is $2\sqrt{a^{2} + \lambda^{2}/4} \ge \lambda$, where $a$ is the population mismatch. Hence $\|\mathcal{N}_\lambda - \Phi\|_\diamond \ge \lambda$ for every $\Phi \in \mathcal{C}_X$, and
\begin{equation}
\min_{\Phi \in \mathcal{C}_X} \|\mathcal{N}_\lambda - \Phi\|_\diamond = \lambda ,
\end{equation}
which is Eq.~(\ref{eq:condC}).

\subsection{Distance to the entanglement-breaking set}
\label{app:eb}

The Choi state of $\mathcal{N}_\lambda$ is $J_\lambda = \tfrac{1}{2}\big(|00\rangle\langle 00| + |11\rangle\langle 11|\big) + \tfrac{\lambda}{2}\big(|00\rangle\langle 11| + |11\rangle\langle 00|\big)$, whose partial transpose $J_\lambda^\Gamma$ has spectrum $\{\tfrac{1}{2}, \tfrac{1}{2}, \tfrac{\lambda}{2}, -\tfrac{\lambda}{2}\}$. A channel is entanglement breaking if and only if its Choi state is separable \cite{Horodecki2003}, hence PPT, so $J_\Phi^\Gamma \ge 0$ for every $\Phi \in \mathrm{EB}$. Let $P$ project onto the negative eigenvector of $J_\lambda^\Gamma$; since $J_\lambda^\Gamma - J_\Phi^\Gamma$ is traceless and $\|2P - \mathbb{I}\|_\infty = 1$,
\begin{equation}
\|J_\lambda^\Gamma - J_\Phi^\Gamma\|_1
\;\ge\; \big|\mathrm{tr}\big[(J_\lambda^\Gamma - J_\Phi^\Gamma)(2P - \mathbb{I})\big]\big|
\;=\; \lambda + 2\,\mathrm{tr}\big[J_\Phi^\Gamma P\big] \;\ge\; \lambda .
\end{equation}
Since partial transposition on a qubit factor increases the trace norm by at most a factor of two \cite{Watrous2018}, and the diamond norm dominates the Choi-state trace distance,
\begin{equation}
\|\mathcal{N}_\lambda - \Phi\|_\diamond \;\ge\; \|J_\lambda - J_\Phi\|_1 \;\ge\; \tfrac{1}{2}\,\|J_\lambda^\Gamma - J_\Phi^\Gamma\|_1 \;\ge\; \frac{\lambda}{2} .
\end{equation}
Together with the upper bound of \ref{app:cx} (note $\Delta \in \mathcal{C}_X \subset \mathrm{EB}$),
\begin{equation}
\frac{\lambda}{2} \;\le\; \min_{\Phi \in \mathrm{EB}} \|\mathcal{N}_\lambda - \Phi\|_\diamond \;\le\; \lambda .
\end{equation}

\subsection{Condition (C) implies condition (A)}
\label{app:CtoA}

Suppose $\|\mathcal{N} - \Phi\|_\diamond \le \varepsilon$ for some $\Phi \in \mathcal{C}_X$, and let $\rho_S = \mathcal{N}(\rho_0)$ for an arbitrary input $\rho_0$. The state $\sigma \equiv \Phi(\rho_0)$ is pointer-diagonal, so $\sigma = \Delta\sigma$, and by contractivity of the trace norm under the channel $\Delta$,
\begin{equation}
\|\rho_S - \Delta\rho_S\|_1
\;\le\; \|\rho_S - \sigma\|_1 + \|\Delta\sigma - \Delta\rho_S\|_1
\;\le\; 2\,\|\rho_S - \sigma\|_1 \;\le\; 2\varepsilon .
\end{equation}
Thus condition (C) at tolerance $\varepsilon_{eb}$ implies condition (A) at tolerance $2\varepsilon_{eb}$, for every input state. The converse fails: condition (A) constrains only the one state actually prepared.

\end{document}